\documentclass[fleqn]{article}
\usepackage{amsmath}
\usepackage{amsthm}
\usepackage{amssymb}
\usepackage{enumerate}
\usepackage[all]{xy}

\addtolength{\textwidth}{3cm}
\addtolength{\hoffset}{-1.5cm}
\addtolength{\textheight}{2cm}
\addtolength{\voffset}{-1cm}

\theoremstyle{remark}

\theoremstyle{definition}

\theoremstyle{plain}
\newtheorem{lem}{Lemma}

\newtheorem{prop}{Proposition}
\newtheorem{conj}{Conjecture}

\DeclareMathOperator{\Tr}{Tr}

\DeclareMathOperator{\End}{End}
\DeclareMathOperator{\id}{id}
\newcommand{\ilim}{\mathop{\varprojlim}\limits}

\usepackage[pdftex]{hyperref}

\begin{document}
\begin{center}
\Large
{\bf On the algebra of local unitary invariants of pure and mixed quantum states}
\end{center}
\vspace*{-.1cm}
\begin{center}
P\'eter Vrana
\end{center}
\vspace*{-.4cm} \normalsize
\begin{center}
Department of Theoretical Physics, Institute of Physics, Budapest University of\\ Technology and Economics, H-1111 Budapest, Hungary

\vspace*{.2cm}
(\today)
\end{center}

\begin{abstract}
We study the structure of the inverse limit of the graded algebras of local unitary
invariant polynomials using its Hilbert series. For $k$ subsystems, we conjecture
that the inverse limit is a free algebra and the number of algebraically independent
generators with homogenous degree $2m$ equals the number of conjugacy classes of
index $m$ subgroups in a free group on $k-1$ generators.

Similarly, we conjecture that the inverse limit in the case of $k$-partite mixed
state invariants is free and the number of algebraically independent
generators with homogenous degree $m$ equals the number of conjugacy classes of
index $m$ subgroups in a free group on $k$ generators. The two conjectures are
shown to be equivalent.

To illustrate the equivalence, using the representation theory of the unitary groups,
we obtain all invariants in the $m=2$ graded parts and express them in a simple form
both in the case of mixed and pure states. The transformation between the two forms
is also derived. Analogous invariants of higher degree are also introduced.
\end{abstract}

\section{Introduction}

Entanglement of a quantum state in a composite system is captured by the values
of entanglement measures, functions which are invariant under the action of some
group modelling manipulations which are performed locally, without interaction
between the subsystems. In most cases, this group is either the LU (local unitary)
group which describes local transformations that can be applied with probability
one, or the SLOCC group (stochastic local operations and classical communication)
corresponding to transformations that can be done with nonzero probability\cite{Dur}.
For a review on quantum entanglement, see \cite{Horodecki}

Previous works focus mostly on composite systems with distinguishable constituents \cite{Verstraete, Brylinski, LT, LTT},
all of which has the same Hilbert space dimension. Considering the LU group has the
advantage that invariant polynomials can be written in a form that is independent
of the Hilbert space dimension, at least when it is large enough.

The outline of the paper is as follows. In section~\ref{sec:poly} we summarize
some facts about how the algebra of polynomials in the coefficients of a state
and their conjugates decomposes to irreducible representations under the action
of the local unitary group. In section~\ref{sec:diminv} we present an exact
formula for the dimension of the subspace of degree $2m$ LU-invariant polynomials
for any number of subsystems provided the single particle state spaces are all
at least $m$ dimensional. We interpret these dimensions as that of the degree $m$
homogenous subspace of the inverse limit of the algebras of LU-invariant polynomials.
We conclude this section with a conjecture regarding the structure of the inverse
limit.

In section~\ref{sec:mixent} we refine the derivation of the stable dimension
formula, and conclude that in the case when only one single particle state space
is large enough, a relatively simple formula can also be obtained, and has
an interpretation in terms of mixed state polynomial invariants. In section~\ref{sec:mixpoly}
we formulate and prove this correspondence rigorously, and we also repeat the
inverse limit construction in the case of mixed state invariants.

In section~\ref{sec:fourth} the space of fourth order LU-invariants of multipartite
quantum systems with arbitrary dimensional single particle states is described
completely. It is shown in particular, that the dimension of this space is $2^{k-1}$
where $k$ denotes the number of subsystems. The invariants are expressed in terms
of both pure and mixed states, and the transformation relating the two is also
found. In section~\ref{sec:higher} analogous invariants with
higher degree are introduced. 

\section{Polynomial invariants under the local unitary group}\label{sec:poly}

Let $k\in\mathbb{N}$ and $n=(n_1,\ldots,n_k)\in\mathbb{N}^k$, and consider the
complex Hilbert space $\mathcal{H}_n=\mathbb{C}^{n_1}\otimes\cdots\otimes\mathbb{C}^{n_k}$
describing the pure states of a composite system with $k$ distinguishable
subsystems. The group of local unitary transformations, $LU_n=U(n_1,\mathbb{C})\times\cdots\times U(n_k,\mathbb{C})$,
acts on $\mathcal{H}$ in the obvious way, i.e. regarding $\mathbb{C}^{n_i}$ as
the standard representation of $U(n_i,\mathbb{C})$.

The functions $\mathcal{H}_n\to\mathbb{C}$ which are polynomial in the coefficients and
their conjugates with respect to an arbitrary fixed basis are in one-to-one correspondence
with the vectors in $S(\mathcal{H}_n\oplus\mathcal{H}_n^{*})$, the symmetric algebra on
$\mathcal{H}_n\oplus\mathcal{H}_n^{*}$ on which an action of $LU_n$ is induced. We are looking
for invariant functions, i.e. vectors in the symmetric algebra which are fixed under
this action.

As each graded part is fixed by $LU_n$, it suffices to look at the homogenous subspaces.
By the isomorphism
\begin{equation}
S^p(\mathcal{H}_n\oplus\mathcal{H}_n^{*})\simeq\bigoplus_{m=0}^{p}S^m(\mathcal{H}_n)\otimes S^{p-m}(\mathcal{H}_n^{*})
\end{equation}
these split further into $LU$-invariant subspaces. The action of $Z(U(n_1,\mathbb{C}))\times\{1\}\times\cdots\times\{1\}\simeq U(1)\ni \lambda$
on $S^m(\mathcal{H}_n)\otimes S^{p-m}(\mathcal{H}_n^{*})$ is multiplication by $\lambda^{m-(p-m)}=\lambda^{2m-p}$,
implying that the $m=\frac{p}{2}$ term is the only one we need to look at.
In particular, the order of an invariant polynomial is always even. Note that
this fact makes it convenient for us to use a grading on the algebra of
invariants which is different from the usual one, and take homogenous degree $m$
elements to be polynomials with degree $m$ both in the components and their
conjugates, which is actually a degree $2m$ (real) polynomial.

From now on, $m$ will denote $\frac{p}{2}$. Observe that $S^{m}(\mathcal{H}_n^{*})=S^{m}(\mathcal{H}_n)^{*}$.
Let
\begin{equation}
S^m(\mathcal{H}_n)\simeq\bigoplus_{\alpha\in A}c_{\alpha}V_{\alpha}
\end{equation}
be the decomposition to the orthogonal sum of isotypic subspaces, where $c_\alpha$ are
nonnegative integers and $A$ is a set labelling the isomorphism-classes of irreducible
representations of $LU_n$. Then
\begin{equation}
S^m(\mathcal{H}_n)\otimes S^m(\mathcal{H}_n)^{*}\simeq\bigoplus_{\alpha,\alpha'\in A}c_{\alpha}c_{\alpha'}V_{\alpha}\otimes V_{\alpha'}^{*}
\end{equation}
The multiplicity of the trivial representation in $V_{\alpha}\otimes V_{\alpha'}^{*}$ is
$1$ if $\alpha=\alpha'$ and $0$ if $\alpha\neq\alpha'$. We can conclude that for $\alpha\in A$
there are three possibilities:
\begin{enumerate}[1.]
\item $c_{\alpha}=0$, in this case we do not get any invariants
\item $c_{\alpha}=1$, in this case there is a one dimensional subspace of invariant polynomials in $V_{\alpha}\otimes V_{\alpha}^{*}\le S^m(\mathcal{H}_n)\otimes S^{m}(\mathcal{H}_n^{*})$, and we can choose a "canonical" element spanning it
\item $c_{\alpha}>1$, in this case we get a $c_{\alpha}^2$ dimensional space of invariants in which we can not find a "distinguished" basis in any obvious way
\end{enumerate}

In sections \ref{sec:fourth} and \ref{sec:higher} we will concentrate on those components for which the coefficient
$c_{\alpha}$ is $1$. The above-mentioned distinguished element can be obtained as
follows. Up to normalization, there exists a unique inner product on $S^m(\mathcal{H}_n)$
which is invariant under the induced action of the full unitary group acting on $\mathcal{H}_n$.
This follows from the fact that $S^m(\mathcal{H}_n)$ carries an irreducible representation
of this group. We choose the normalization so that for any $\psi\in\mathcal{H}_n$, the
equation $\|\psi^{m}\|=\|\psi\|^{m}$ holds, where $\psi^{m}=\psi\otimes\cdots\otimes\psi\in S^m(\mathcal{H}_n)$.
Now let $P_\alpha$ denote the orthogonal projection onto the irreducible subrepresentation
indexed by $\alpha$. The value of the distinguished invariant polynomial on $\psi\in\mathcal{H}_n$
is then $\langle\psi^{m},P_{\alpha}\psi^{m}\rangle$.

\section{Stabilized dimensions of the spaces of invariant polynomials}\label{sec:diminv}

As it was mentioned in section \ref{sec:poly}, the dimension of the space of
degree $2m$ polynomial invariants on $\mathcal{H}_n$ is
\begin{equation}
\sum_{\alpha\in A}c_\alpha^2
\end{equation}
where $A$ is an index set labelling the equivalence classes of irreducible
representations of the LU group, and $\{c_\alpha\}_{\alpha\in A}$ are the
multiplicities of the irreducible representations in $S^m(\mathcal{H}_n)$.

Let $\nu$ be a partition of $m$ (this fact will be denoted by $\nu\vdash m$).
Denoting the corresponding Schur functor by $\mathbb{S}_{\nu}$, we have the
following isomorphism \cite{FH}:
\begin{equation}\label{eq:decomp}
\mathbb{S}_{\nu}\mathcal{H}_n\simeq\bigoplus_{\lambda_1,\ldots,\lambda_k\vdash m}C_{\nu\lambda_1\ldots\lambda_k}\mathbb{S}_{\lambda_1}\mathbb{C}^{n_1}\otimes\cdots\otimes\mathbb{S}_{\lambda_k}\mathbb{C}^{n_k}
\end{equation}
where $C_{\nu\lambda_1\ldots\lambda_k}$ is the multiplicity of $V_{\nu}$ in
$V_{\lambda_1}\otimes\cdots\otimes V_{\lambda_k}$ as a representation of
the symmetric group $S_m$, where $V_{\nu}$ denotes the irreducible
$S_m$-representation indexed by the partition $\nu$. Denoting the character
of $V_\lambda$ by $\chi_\lambda$ we can write
\begin{equation}
C_{\nu\lambda_1\ldots\lambda_k}=(\chi_{\nu},\chi_{\lambda_1}\chi_{\lambda_2}\cdots\chi_{\lambda_k})_{S_m}
\end{equation}
where we denoted by $(\cdot,\cdot)_{S_m}$ the usual inner product on the space
of class functions on $S_m$. Our convention is that this inner product is
semilinear in the first and linear in the second argument, but this does not
have much effect as all the appearing characters are real.

Note that $\mathbb{S}_{\lambda_i}(\mathcal{H}_i)$ is the zero vector space if
and only if $|\lambda_i|>\dim\mathcal{H}_i$. Therefore when some of the Hilbert
spaces are less than $m$ dimensional, some of the terms in eq. (\ref{eq:decomp}) may
vanish, decreasing the dimension of the space of invariants. On the other hand,
if $\forall i:n_i\ge m$, this dimension is independent of the exact values of
the $n_i$, meaning that for sufficiently large single particle state spaces
the dimension stabilizes, and depends only on $k$ and $m$.

Let $I_{k,n}$ denote the graded algebra of polynomials over $\mathcal{H}_n$
which are invariant under the action of $LU_n$. Suppose that $n,n'\in\mathbb{N}^k$
such that $n\le n'$ with respect to the componentwise (product) order. Then we have
the inclusion $\iota_{n,n'}:\mathcal{H}_n\hookrightarrow\mathcal{H}_{n'}$ which
is the tensor product of the usual inclusions $\mathbb{C}^{n_i}\hookrightarrow\mathbb{C}^{n'_i}$
sending an $n_i$-tuple to the first $n_i$ components. We can similarly regard
$LU_n$ as a subgroup of $LU_{n'}$ which stabilizes the image of $\iota_{n,n'}$,
and thus $\iota_{n,n'}$ is an $LU_n$-equivariant linear map. Therefore it induces
a morphism of graded algebras $\varrho_{n,n'}:I_{k,n'}\to I_{k,n}$ (note that
the algebra of polynomials \emph{on} a vector space is the symmetric algebra of
its \emph{dual} space, which is a contravariant construction).

Clearly, $\iota_{n,n}$ is the identity and if $n\le n'\le n''$ then
$\iota_{n',n''}\circ\iota_{n,n'}=\iota_{n,n''}$, which implies that
$\varrho_{n,n}=\id_{I_{k,n}}$ and $\varrho_{n,n'}\circ\varrho_{n',n''}=\varrho_{n,n''}$.
The central object which we study is the inverse limit of this system of graded
algebras and their morphisms:
\begin{equation}
I_k:=\ilim_{n\in\mathbb{N}^k}I_{k,n}=\left\{(f_n)_{n\in\mathbb{N}^k}\in\prod_{n\in\mathbb{N}^k}I_{k,n}\Bigg|\forall n\le n':f_n=\varrho_{n,n'}f_{n'}\right\}
\end{equation}
Note that as the product is taken in the category of graded algebras, it consists
of sequences with bounded degree. We will call $I_k$ the algebra of LU-invariants.

\begin{lem}
Suppose that $n,n'\in\mathbb{N}^k$ and $n\le n'$. Let $m\in\mathbb{N}$ such that
for all $i$ we have $m\le n_i$. Then the restriction of $\varrho_{n,n'}$ is an
isomorphism (of vector spaces) between the spaces of homogenous degree $m$ elements
of $I_{k,n'}$ and $I_{k,n}$.
\end{lem}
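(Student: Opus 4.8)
The plan is to identify the space of homogeneous degree $m$ invariants on $\mathcal{H}_n$ with $\End_{LU_n}(S^m(\mathcal{H}_n))$. Indeed, such an invariant is the same as an $LU_n$-fixed vector in $S^m(\mathcal{H}_n)\otimes S^m(\mathcal{H}_n)^{*}\simeq\End(S^m(\mathcal{H}_n))$, and by Schur's lemma the dimension of the fixed space is $\sum_{\alpha}c_\alpha^2$, in agreement with section~\ref{sec:diminv}; under this identification the canonical invariant attached to $\psi$ is $\langle\psi^m,T\psi^m\rangle$ for $T\in\End_{LU_n}(S^m(\mathcal{H}_n))$. Since $\iota_{n,n'}$ is an isometric embedding, it induces an isometric $LU_n$-equivariant embedding $j:=S^m(\iota_{n,n'}):S^m(\mathcal{H}_n)\hookrightarrow S^m(\mathcal{H}_{n'})$, and one checks that $\varrho_{n,n'}$ becomes the compression $T\mapsto j^{*}Tj$. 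By the stabilization remark in section~\ref{sec:diminv}, the hypothesis $m\le n_i\le n_i'$ forces both $\End_{LU_n}(S^m(\mathcal{H}_n))$ and $\End_{LU_{n'}}(S^m(\mathcal{H}_{n'}))$ to have the same dimension $\sum_\alpha c_\alpha^2$, so it suffices to prove that this compression map is injective.

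The key representation-theoretic input is the branching rule for the standard block embedding $U(n_i,\mathbb{C})\hookrightarrow U(n_i',\mathbb{C})$: restricting $\mathbb{S}_{\lambda}\mathbb{C}^{n_i'}$ with $\lambda\vdash m$ to $U(n_i,\mathbb{C})$ yields $\mathbb{S}_{\lambda}\mathbb{C}^{n_i}$ with multiplicity one, while every other constituent is of the form $\mathbb{S}_{\mu}\mathbb{C}^{n_i}$ with $|\mu|<m$. Consequently, when $S^m(\mathcal{H}_{n'})$ is restricted to $LU_n$, the irreducible constituents all of whose labels have size $m$ occur only in the top branching pieces, and these assemble to exactly the subrepresentation $j(S^m(\mathcal{H}_n))$; its orthogonal complement contains only constituents in which at least one label has size strictly less than $m$. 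Since an external tensor product $\mathbb{S}_{\lambda_1}\mathbb{C}^{n_1}\otimes\cdots\otimes\mathbb{S}_{\lambda_k}\mathbb{C}^{n_k}$ determines its labels, the subrepresentation $j(S^m(\mathcal{H}_n))$ and its complement share no irreducible $LU_n$-type.

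To finish, I would take $T\in\End_{LU_{n'}}(S^m(\mathcal{H}_{n'}))$ with $j^{*}Tj=0$. Being $LU_{n'}$-equivariant, $T$ is in particular $LU_n$-equivariant, so by the disjointness of types it preserves the decomposition $S^m(\mathcal{H}_{n'})=j(S^m(\mathcal{H}_n))\oplus j(S^m(\mathcal{H}_n))^{\perp}$ and has vanishing off-diagonal blocks; thus $j^{*}Tj=0$ already gives $T|_{j(S^m(\mathcal{H}_n))}=0$. On the other hand $T$ preserves each $LU_{n'}$-isotypic component $W_\alpha\simeq\mathbb{C}^{c_\alpha}\otimes U_\alpha$, acting as $M_\alpha\otimes\id$; since the top branching piece of each copy of $U_\alpha$ lies in $j(S^m(\mathcal{H}_n))$ and is nonzero (its labels $\lambda_i$ satisfy $\ell(\lambda_i)\le m\le n_i$), the vanishing $T|_{j(S^m(\mathcal{H}_n))}=0$ forces $M_\alpha=0$ for every $\alpha$, whence $T=0$. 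This establishes injectivity and, together with the equality of dimensions, that $\varrho_{n,n'}$ restricts to an isomorphism in degree $m$.

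The step I expect to require the most care is the branching/placement claim: verifying not merely that the size-$m$ constituents of $S^m(\mathcal{H}_{n'})$ are abstractly isomorphic to $S^m(\mathcal{H}_n)$, but that they are realized precisely by the image $j(S^m(\mathcal{H}_n))$, so that the isotypic argument above can be applied geometrically rather than only up to isomorphism.
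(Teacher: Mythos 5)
Your proposal is correct, but it proves injectivity by a genuinely different route than the paper. Both arguments share the same skeleton: by the stabilized decomposition of $S^m(\mathcal{H}_n)$ (all labels $\lambda_i\vdash m$ occur with multiplicity $C_{(m)\lambda_1\ldots\lambda_k}$, nonzero because $\ell(\lambda_i)\le m\le n_i$), the two homogeneous components have equal finite dimension, so it suffices to show $\varrho_{n,n'}$ is injective in degree $m$. The paper's injectivity argument is elementary and commutative-algebraic: it reduces to the corank-one case $n'_i=n_i+1$, observes that an invariant $f$ killed by $\varrho_{n,n'}$ lies in the vanishing ideal of the image of $\iota_{n,n'}$, uses the permutation subgroup of $LU_{n'}$ acting on the $i$th factor to place $f$ in every one of the ideals $J_1,\ldots,J_{n'_i}$ (coordinates with prescribed $i$th index), notes that the intersection of these ideals is their product, and then uses the balanced holomorphic/antiholomorphic degree forced by invariance under the diagonal torus to conclude that every monomial of $f$ would need degree exceeding $2m$, whence $f=0$. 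You instead recast restriction as the compression $T\mapsto j^{*}Tj$ on $\End_{LU}(S^m(\mathcal{H}))$ and prove injectivity from the branching rule for $U(n'_i)\downarrow U(n_i)$: the size-$m$ constituents of the restriction form a canonical, multiplicity-free top layer, types separate this layer from its complement, and Schur's lemma finishes. What each buys: the paper's proof is self-contained and needs no branching theory, only the vanishing-ideal and weight tricks; yours is more structural, avoids the corank-one reduction, and makes transparent \emph{why} the dimension stabilizes (the top layer of the branching is canonical), staying closer in spirit to the representation-theoretic setup of sections 2 and 3. The one step you flag as delicate --- that the all-size-$m$ constituents are realized \emph{exactly} by $j(S^m(\mathcal{H}_n))$ and not merely isomorphic to it --- is a genuine obligation in your approach, but it closes in one line from what you already have: $j(S^m(\mathcal{H}_n))$ is an $LU_n$-subrepresentation all of whose types have all labels of size $m$, hence it is contained in the canonical sum $W$ of the corresponding isotypic components; by your branching rule each such type occurs in $S^m(\mathcal{H}_{n'})|_{LU_n}$ with multiplicity $C_{(m)\lambda_1\ldots\lambda_k}$, equal to its multiplicity in $S^m(\mathcal{H}_n)$, so $\dim W=\dim S^m(\mathcal{H}_n)$ and the inclusion is an equality. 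With that sentence added, your argument is complete.
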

\begin{proof}
As it was already noted, the dimension of the two homogenous parts is equal. We
will show that $\varrho_{n,n'}$ is injective on elements of degree at most $m$.

Suppose first that for some $1\le i\le k$, $n_i=n'_i-1$ and for $j\neq i$ $n_j=n'_j$.
Let $\{e_{1,1,\ldots,1},\ldots,e_{n'_1,\ldots,n'_k}\}$ be the basis of
$\mathcal{H}_{n'}$ formed by tensor products of standard basis elements of the
$\mathbb{C}^{n_i}$. Then the algebra of real polynomials is generated by the
coordinate functions $\{e_{1,1,\ldots,1}^*,\ldots,e_{n'_1,\ldots,n'_k}^*\}$
and their conjugates $\{\overline{e_{1,1,\ldots,1}^*},\ldots,\overline{e_{n'_1,\ldots,n'_k}^*}\}$.

Let $f\in I_{k,n'}$ be a degree $m$ homogenous polynomial such that $\varrho_{n,n'}f=0$.
This means that $f$ vanishes on the image of $\iota_{n,n'}$. Denoting by $J_{a}$
the ideal generated by elements of the form $\{e_{j_1,\ldots,j_k}\}$ such
that $j_i=a$, we can reformulate this fact as $f\in J_{n_i}$. Note that $f$ is
invariant under the action of $LU_n$, and we have the subgroup $S_{n_i}\le LU_n$
which permutes the basis elements in the $i$th factor $\mathbb{C}^{n_i}$,
therefore $f$ is also contained in the ideals $J_1,\ldots,J_{n_i}$.

But the intersection of the ideals $J_1,\ldots,J_{n_1}$ is their product,
therefore $f$ is in the ideal generated by $n_1$-fold products of the
coordinate functions. As $f$ is $LU_n$-invariant, its terms must contain
the same number of conjugate coordinate functions, and therefore its
homogenous parts of degree less then $n_1$ vanish. $m\le n_1$ implies that
$f=0$.

For the general case, observe that if $n\le n'$ then $\varrho_{n,n'}$ can be
written as a composition of the maps considered above (or is the identity in
the case of $n=n'$), and hence also injective.
\end{proof}
This lemma means that every element of $I_k$ is represented in some $I_{k,n}$
(it suffices to take $n$ to be $(m,m,\ldots,m)$ with $m$ the degree of the element),
and that if $n_{min}=\min\{n_i\}$ then the factors of $I_k$ and $I_{k,n}$ by the
ideals generated by homogenous elements of degree at least $n_{min}+1$ are
isomorphic. Therefore, the algebras $I_{k,n}$ and $I_k$ are closely related,
while the latter seems considerably simpler to study. Our next aim will be to
calculate the Hilbert series of $I_k$.

Let $d_{k,m}$ denote the stabilized dimension of degree $m$ LU-invariants for a
composite system of $k$ subsystems for which a remarkable formula was
found by Hero and Willenbring \cite{HW}. We present here a slightly different derivation for later
convenience. The value of $d_{k,m}$ can be expressed as follows:
\begin{equation}\label{eq:dkmformula}
\begin{split}
d_{k,m}
  & = \sum_{\lambda_1,\ldots,\lambda_k\vdash m}C_{(m)\lambda_1,\ldots,\lambda_k}^2  \\
  & = \sum_{\lambda_1,\ldots,\lambda_k\vdash m}(\chi_{(m)},\chi_{\lambda_1}\cdots\chi_{\lambda_k})^2  \\
  & = \sum_{\lambda_1,\ldots,\lambda_k\vdash m}(\chi_{\lambda_1}\cdots\chi_{\lambda_{k-1}},\chi_{\lambda_k})(\chi_{\lambda_k},\chi_{\lambda_1}\cdots\chi_{\lambda_{k-1}})  \\
  & = \sum_{\lambda_1,\ldots,\lambda_{k-1}\vdash m}(\chi_{\lambda_1}\cdots\chi_{\lambda_{k-1}},\chi_{\lambda_1}\cdots\chi_{\lambda_{k-1}})  \\
  & = \sum_{\lambda_1,\ldots,\lambda_{k-1}\vdash m}(\chi_{(m)},\chi_{\lambda_1}^2\cdots\chi_{\lambda_{k-1}}^2)  \\
  & = (\chi_{(m)},\sum_{\lambda_1,\ldots,\lambda_{k-1}\vdash m}\chi_{\lambda_1}^2\cdots\chi_{\lambda_{k-1}}^2)  \\
  & = (\chi_{(m)},\left(\sum_{\lambda\vdash m}\chi_{\lambda}^2\right)^{k-1})  \\
\end{split}
\end{equation}

For a finite group $G$, the sum of the squares of absolute values of the characters
of inequivalent irreducible representations gives the character of the representation
on the group algebra $\mathbb{C}G$ by conjugation. The value of this character
on $g$ is the number of the elements in $G$ which commute with $g$. Therefore,
denoting by $R$ a set of representatives of the conjugacy classes in $S_m$ we have that
\begin{equation}
\begin{split}
(\chi_{(m)},\left(\sum_{\lambda\vdash m}\chi_{\lambda}^2\right)^{k-1})
  & = \frac{1}{|G|}\sum_{g\in S_m}\chi_{conj.}(g)^{k-1}  \\
  & = \frac{1}{|G|}\sum_{g\in R}|C_{S_m}(g)|\chi_{conj.}(g)^{k-1}  \\
  & = \frac{1}{|G|}\sum_{g\in R}\frac{|S_m|}{|Z_{S_m}(g)|}|Z_{S_m}(g)|^{k-1}  \\
  & = \sum_{g\in R}|Z_{S_m}(g)|^{k-2}
\end{split}
\end{equation}
where $Z_G(g)$ denotes the centralizer, and $C_G(g)$ denotes the conjugacy class of $g$.

Conjugacy classes in $S_m$ may be conveniently labelled by cycle types, which
are $m$-tuples of nonnegative integers, the $j$th integer being the number of
$j$-cycles when an (arbitrary) element of the given conjugacy class is written
as a product of disjoint cycles. Clearly, $a=(a_1,\ldots,a_m)$ describes a cycle
type if and only if it consists of nonnegative integers and
\begin{equation}
\sum_{i=1}^{m}ia_i=m
\end{equation}
holds. We will denote this fact by $a\Vdash m$.

If $g$ is an element with cycle type $a$, then the order of its centralizer is
given by the formula
\begin{equation}
|C_{S_m}(g)|=\prod_{i=1}^{m}i^{a_i}a_i!
\end{equation}
We conclude that
\begin{equation}\label{eq:stabdim}
d_{k,m}=\sum_{a\Vdash m}\left(\prod_{i=1}^{m}i^{a_i}a_i!\right)^{k-2}
\end{equation}

The Hilbert series of the algebra $I_k$ is the formal power series
\begin{equation}
\sum_{m\ge 0}d_{k,m}t^m
\end{equation}
Using eq. (\ref{eq:stabdim}) this can be rewritten as
\begin{equation}
\begin{split}
\sum_{m\ge 0}d_{k,m}t^m
  & = \sum_{m\ge 0}\sum_{a\Vdash m}\left(\prod_{i=1}^{m}i^{a_i}a_i!\right)^{k-2}t^m  \\
  & = \sum_{a_1,a_2,\ldots\ge 0}\prod_{i=1}^{m}\left(i^{a_i}a_i!\right)^{k-2}t^{ia_i}  \\
  & = \prod_{i\ge 1}\left(\sum_{a\ge 0}(i^{a}a!)^{k-2}t^{ia}\right)  \\
  & = \prod_{d\le 1}(1-t^d)^{-u_d(F_{k-1})}
\end{split}
\end{equation}
where in the last row $u_d(G)$ denotes the number of conjugacy classes of index
$d$ subgroups of a group $G$ and $F_{k-1}$ is the free group on $k-1$ generators.
This last equality can be found in \cite{Stanley}.

The formula obtained suggests the following conjecture (here we return to the usual
grading, which differs from the previously used one by a factor of two):
\begin{conj}
The algebra of $LU$-invariants $I_k$ of $k$-partite quantum systems is free,
and the number of degree $2d$ invariants in an algebraically independent generating
set equals the number of conjugacy classes of index $d$ subgroups in the free group
on $k-1$ generators.
\end{conj}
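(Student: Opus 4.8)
The plan is to exhibit an explicit, combinatorially indexed basis of $I_k$ in which multiplication is transparent, and to read off freeness directly. Since the dimension computation already identifies $d_{k,m}$ with the number of isomorphism classes of actions of $F_{k-1}$ on an $m$-element set $[m]$, the natural guess is that these actions index a basis of the degree $m$ part (in the halved grading). Concretely, I would attach to each $k$-tuple of permutations $(\sigma_1,\dots,\sigma_k)\in (S_m)^k$ the complete contraction (``trace monomial'') obtained by contracting, for each subsystem $i$, the $i$-th index of the $j$-th copy of $\psi$ with that of the $\sigma_i(j)$-th copy of $\bar\psi$. Because we work in $S^m(\mathcal{H}_n)\otimes S^m(\mathcal{H}_n)^{*}$, these trace monomials depend only on the orbit of $(\sigma_1,\dots,\sigma_k)$ under the two-sided action $(\sigma_i)\mapsto(\rho\sigma_i\tau^{-1})$ of $S_m\times S_m$; normalising $\sigma_k=\id$ then identifies these orbits with $(k-1)$-tuples of permutations up to simultaneous conjugation, i.e. with isomorphism classes of $F_{k-1}$-actions on $[m]$.

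First I would prove that, in the stable range $n_i\ge m$, these trace monomials span the degree $2m$ invariants. This is the first fundamental theorem for the compact group $\prod_i U(n_i,\mathbb{C})$ acting on $\mathcal{H}_n\oplus\mathcal{H}_n^{*}$: every invariant is a linear combination of complete contractions of the tensor indices. Combined with the already-proven equality $\dim (I_k)_m=d_{k,m}$ and the fact that the number of orbits is exactly $d_{k,m}$, spanning forces linear independence, so the trace monomials form a basis of $(I_k)_m$. The second step is to identify the algebra structure: multiplying the trace monomials of configurations $c$ on $[m]$ and $c'$ on $[m']$ produces the complete contraction of $\psi^{\otimes(m+m')}\otimes\bar\psi^{\otimes(m+m')}$ whose pattern is the disjoint union $c\sqcup c'$, and since the target already lies in the symmetric power this product is exactly the basis element indexed by $c\sqcup c'$. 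Hence multiplication is disjoint union of $F_{k-1}$-actions, and every configuration decomposes uniquely into its transitive (\emph{connected}) components. Transitive $F_{k-1}$-actions on $[m]$ correspond to conjugacy classes of index $m$ subgroups, of which there are $u_m(F_{k-1})$. Therefore the trace monomials of connected configurations generate $I_k$, and since the set of all monomials in them is precisely the trace-monomial basis, they are algebraically independent: $I_k$ is the free commutative algebra on $u_m(F_{k-1})$ generators in degree $2m$, which is the assertion.

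The main obstacle is making the two steps rigorous in the inverse limit rather than at a fixed $n$. One must show that a connected trace monomial defines a genuine element of $I_k=\ilim_{n} I_{k,n}$ compatible with the maps $\varrho_{n,n'}$, that the first fundamental theorem and the resulting independence hold uniformly for all $n_i\ge m$ (so that no relation created in the unstable range survives into the limit), and that the decomposition into connected components is respected by the inverse system. Equivalently, the crux is a Schur--Weyl type double-commutant statement guaranteeing that distinct isomorphism classes of $F_{k-1}$-actions give stably linearly independent invariant functions; granting the spanning statement, this independence is automatic from the dimension count, so the genuine work is a clean, dimension-independent proof of spanning together with the verification that the basis and its multiplicative structure pass coherently to $I_k$. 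It may be technically cleaner to carry this out first for the mixed-state algebra, where the trace monomials are ordinary products of traces of density matrices and the relevant free group is $F_k$, and then transport the conclusion through the equivalence of the two conjectures established earlier.
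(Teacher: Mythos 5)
There is no proof in the paper to compare yours against: the statement is presented there as a conjecture. The only support the paper offers is the Hilbert series computation of Section~3, namely $\sum_m d_{k,m}t^m=\prod_{d\ge 1}(1-t^d)^{-u_d(F_{k-1})}$, which is the Hilbert series that a free algebra with the conjectured generator counts \emph{would} have, together with the $k=2$ verification sketched in the conclusion. A graded dimension count alone cannot establish freeness, which is exactly why the author stops at a conjecture; your proposal is therefore an attempt to prove something the paper leaves open, and must be judged on its own.

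Judged that way, your outline is sound and I see no step that would fail. Spanning is the FFT for $\prod_i U(n_i,\mathbb{C})$: invariants of a product group in a tensor product of modules factor as the tensor product of the invariants of the factors (all groups here are compact), so the classical $GL_n$ fundamental theorem, applied factor by factor and restricted to the diagonal $\psi^{\otimes m}\otimes\bar\psi^{\otimes m}$, shows the trace monomials span the bidegree-$(m,m)$ invariants for \emph{every} $n$, not just stably. A Burnside count gives that the number of orbits of $(S_m)^k$ under $(\sigma_i)\mapsto(\rho\sigma_i\tau^{-1})$, equivalently of $(k-1)$-tuples modulo simultaneous conjugation, equals $\sum_{a\Vdash m}\bigl(\prod_i i^{a_i}a_i!\bigr)^{k-2}=d_{k,m}$; in the stable range $n_i\ge m$, spanning plus this count forces the monomials attached to distinct orbits to form a basis, which also rules out any identifications beyond the two-sided one. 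The product rule $f_c\cdot f_{c'}=f_{c\sqcup c'}$, the uniqueness of the decomposition of a finite $F_{k-1}$-set into transitive summands, and the bijection between transitive $F_{k-1}$-sets of cardinality $d$ and conjugacy classes of index-$d$ subgroups (point stabilizers) then exhibit $I_k$ as the monoid algebra of the free commutative monoid on the transitive classes, i.e.\ as a polynomial ring with $u_d(F_{k-1})$ generators in degree $2d$, as claimed. Contrary to your own assessment, the inverse-limit bookkeeping is the easiest part: the contraction formula defining $f_c$ is uniform in $n$ (only the index ranges change), so compatibility with the maps $\varrho_{n,n'}$ is automatic, and the paper's Lemma~1 identifies the degree-$m$ part of $I_k$ with that of $I_{k,n}$ for $n\ge(m,\ldots,m)$, so both the basis property and the multiplication rule pass to the limit verbatim. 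What separates your sketch from a complete proof is only the careful write-up of the FFT reduction and the Burnside count. One small correction: on the mixed-state side the trace monomials are not ordinary products of traces of the density matrix but contractions $\varrho\mapsto\Tr\bigl(\varrho^{\otimes m}(\sigma_1\otimes\cdots\otimes\sigma_k)\bigr)$; they factor into products precisely along the transitive components, which is of course exactly the structure your argument exploits.
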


\section{Mixed state entanglement}\label{sec:mixent}

Observe that the derivation in eq. (\ref{eq:dkmformula}) remains valid when we
only assume that $\dim\mathcal{H}_{k}\ge m$ with some modification. If $\mathcal{H}=\mathcal{H}_1\otimes\cdots\otimes\mathcal{H}_k$
where $\dim\mathcal{H}_i=n_i$ and $n_k\ge m$ then
\begin{equation}
\begin{split}
d_{(n_1,\ldots,n_{k-1}),m} & := \dim(S^{2m}(\mathcal{H}\oplus\mathcal{H}^*))^{LU}  \\
 & = (\chi_{(m)},\prod_{i=1}^{k-1}\left(\sum_{\substack{\lambda\vdash m  \\  |\lambda|\le n_i}}\chi_{\lambda}^2\right))  \\
\end{split}
\end{equation}
Moreover, in this case we do not need $n_k\to\infty$ to have stabilization for
every $m$, the condition $n_k\ge n_1\cdot n_2\cdot\cdots\cdot n_{k-1}$ is sufficient.

An important special case is when $n_i=2$ for $i\le k$. In this case the above formula
reduces to
\begin{equation}
\begin{split}
d_{(2,\ldots,2),m} & = \dim(S^{2m}(\mathcal{H}\oplus\mathcal{H}^*))^{LU}  \\
 & = (\chi_{(m)},\left(\sum_{\substack{\lambda\vdash m  \\  |\lambda|\le 2}}\chi_{\lambda}^2\right)^{k-1})  \\
\end{split}
\end{equation}

One can also find a physical interpretation of this condition. Regarding
$\mathcal{H}_S:=\mathcal{H}_1\otimes\cdots\otimes\mathcal{H}_{k-1}$ as the Hilbert space
of an \emph{open} quantum system interacting with its environment with
state space $\mathcal{H}_{ENV}:=\mathcal{H}_k$, the condition above ensures that every mixed
state over $\mathcal{H}_S$ arises as the reduced state of a pure state of
$\mathcal{H}=\mathcal{H}_S\otimes\mathcal{H}_{ENV}$.

Recall that in this case given a mixed state $\varrho\in\End(\mathcal{H}_S)$ we
can find a vector $\psi\in\mathcal{H}$ (called a purification of $\varrho$) such
that $\varrho=\Tr_{ENV}\psi\psi^*$ and that $\psi$ is unique up to transformations
of the form $id_{\mathcal{H}_S}\otimes U$ where $U\in U(\mathcal{H}_{ENV})$. As
$\Tr_{ENV}:\End(\mathcal{H}_S\otimes\mathcal{H}_{ENV})\to\End(\mathcal{H}_S)$
is $GL(\mathcal{H}_S)\times U(\mathcal{H}_{ENV})$-equivariant, we have that
purification gives a bijection between $LU$-equivalence classes of mixed states
over $\mathcal{H}_S$ and pure states in $\mathcal{H}$ with the partial trace as
inverse.

Denoting the set of mixed states over a Hilbert space $\mathcal{H}$ by 
\begin{equation}
D(\mathcal{H}):=\{A\in\End(\mathcal{H})|A\ge 0,\Tr A\le 1\}
\end{equation}
and the set of unit vectors by
\begin{equation}
P(\mathcal{H}):=\{\psi\in\mathcal{H}|\|\psi\|^2=1\}
\end{equation}
we can write the commutative diagram
\begin{equation}
\xymatrix{P(\mathcal{H}) \ar@{->>}[r]^{\Tr_{ENV}\circ P}\ar@{->>}[d]  &  D(\mathcal{H}_S)\ar@{->>}[d]  \\
          P(\mathcal{H})/LU \ar@{-->}[r]^{\sim}  &  D(\mathcal{H}_S)/LU}
\end{equation}
where $P:\mathcal{H}\to\End(\mathcal{H})$ is defined by $\psi\mapsto \psi\psi^*$,
$LU=U(\mathcal{H}_1)\times\cdots\times U(\mathcal{H}_{k-1})\times U(\mathcal{H}_k)$
is the local unitary group acting in the obvious way and the vertical arrows
are the factor maps. Note that when $n_k< n_1\cdot n_2\cdot\cdots\cdot n_{k-1}$,
the lower horizontal map making this diagram commutative (as well as $\Tr_{ENV}\circ P$)
fails to be surjective.

\section{Mixed state polynomial invariants}\label{sec:mixpoly}

We have seen that the $LU$-equivalence problem of mixed states can be reduced
to the $LU$-equivalence problem of pure states. To deal with this latter problem,
one usually seeks for (real) polynomials on the Hilbert space of the composite
system which are invariant under the induced action of the $LU$-group. The reason
for this is that invariant polynomials are relatively easy to handle, while still
separate the orbits.

In the case of mixed states, polynomial invariants are not the ones which are
physically most important for the quantification of entanglement, but they are
equally well-suited for determining whether two states are equivalent as in the
case of pure states. Using the fact that $\Tr_{ENV}\circ P:P(\mathcal{H})\to D(\mathcal{H}_S)$
is an equivariant polynomial function (of degree 2), from a polynomial invariant
$f:D(\mathcal{H}_S)\to\mathbb{C}$ on pure states we can always construct one on
 mixed states, namely $f\circ\Tr_{ENV}\circ P$. Similarly, if we are given an
invariant $g:P(\mathcal{H})\to\mathbb{C}$, we can pull it back via the isomorphism
$D(\mathcal{H}_S)/LU\to P(\mathcal{H})/LU$ to obtain an invariant on mixed states:
\begin{equation}
\xymatrix{  \mathbb{C}  &  \\
          P(\mathcal{H}) \ar@{->>}[r]^{\Tr_{ENV}\circ P}\ar@{->>}[d]\ar[u]^{g}  &  D(\mathcal{H}_S)\ar@{->>}[d]\ar@{-->}[ul] \\
          P(\mathcal{H})/LU \ar[r]^{\sim}\ar@/^2pc/[uu]  &  D(\mathcal{H}_S)/LU  }
\end{equation}
The two constructions are clearly inverses of each other, but it is not clear
that $f$ is polynomial whenever $f\circ\Tr_{ENV}\circ P$ is polynomial.

To prove this, observe that the map $S^{m}(\End(\mathcal{H}_S))\to S^{2m}(\mathcal{H}\oplus\mathcal{H}^*)$
defined by $f\mapsto f\circ\Tr_{ENV}\circ P_2$ is an injective linear map
where $P=P_2\circ P_1$ and $P_1:\mathcal{H}\to\mathcal{H}\oplus\mathcal{H}^*$ is
defined by $\psi\mapsto\psi\oplus\psi^*$ while $P_2:\mathcal{H}\oplus\mathcal{H}^*\to\mathcal{H}\otimes\mathcal{H}^*$
is defined by $\psi\oplus\varphi^*\mapsto\psi\varphi^*$. As the
appearing vector spaces are by assumption finite dimensional, we need to show
that these dimensions are equal.

Let $\mathcal{H}_{S}=\mathcal{H}_{1}\otimes\cdots\otimes\mathcal{H}_{k-1}$ be the
state space of a composite quantum system, and $\mathcal{H}_{ENV}$ the state space
of its environment as before, and let $\mathcal{H}=\mathcal{H}_{S}\otimes\mathcal{H}_{ENV}$
denote the Hilbert space of the joint system composed from the two. Then
\begin{equation}
\begin{split}
S^{2m}(\mathcal{H}\oplus\mathcal{H}^{*})^{U(\mathcal{H}_{ENV})}
  & \simeq S^m(\mathcal{H})\otimes S^m(\mathcal{H}^{*})^{U(\mathcal{H}_{ENV})}  \\
  & \simeq \left(\bigoplus_{\lambda,\lambda'\vdash m}\mathbb{S}_{\lambda}\mathcal{H}_{S}\otimes\mathbb{S}_{\lambda}\mathcal{H}_{ENV}\otimes\mathbb{S}_{\lambda'}\mathcal{H}_{S}^{*}\otimes\mathbb{S}_{\lambda'}\mathcal{H}_{ENV}^*\right)^{U(\mathcal{H}_{ENV})}  \\
  & \simeq \bigoplus_{\lambda\vdash m}\left(\mathbb{S}_{\lambda}\mathcal{H}_{S}\otimes\mathbb{S}_{\lambda}\mathcal{H}_{ENV}\otimes\mathbb{S}_{\lambda}\mathcal{H}_{S}^{*}\otimes\mathbb{S}_{\lambda}\mathcal{H}_{ENV}^*\right)^{U(\mathcal{H}_{ENV})}  \\
  & \simeq \bigoplus_{\lambda\vdash m}\mathbb{S}_{\lambda}\mathcal{H}_{S}\otimes\mathbb{S}_{\lambda}\mathcal{H}_{S}^*  \\
  & \simeq S^m(\mathcal{H}_{S}\otimes\mathcal{H}_{S}^*)= S^{m}(\End(\mathcal{H}_{S}))
\end{split}
\end{equation}
as $GL(\mathcal{H}_S)\times U(\mathcal{H}_{ENV})$-modules where we have used that
\begin{equation}
\left(\mathbb{S}_\lambda\mathcal{H}_{ENV}\otimes\mathbb{S}_\lambda\mathcal{H}_{ENV}^*\right)^{U(\mathcal{H}_{ENV})}\simeq\left\{\begin{array}{ll}
\mathbb{C} & \textrm{if }\dim\mathcal{H}_{ENV}\ge|\lambda|  \\
0 & \textrm{if }\dim\mathcal{H}_{ENV}<|\lambda|
\end{array}\right.
\end{equation}
with the trivial representation on $\mathbb{C}$ and that $\mathbb{S}_\lambda\mathcal{H}_S\simeq 0$
iff $|\lambda|>\dim\mathcal{H}_S\le\dim\mathcal{H}_{ENV}$

Similarly to the case of pure state invariants, we may construct the inverse limit
of all the algebras of invariant polynomials with a fixed number of subsystems.

For $k\in\mathbb{N}$ and $n\in\mathbb{N}^k$, let $I^{mixed}_{k,n}$ denote the
algebra of $LU_n$-invariant (real) polynomials on $\End(\mathcal{H}_n)$. The
inclusions $\iota_{n,n'}:\mathcal{H}_n\to\mathcal{H}_{n'}$ induce also in this
case the maps $\varrho_{n,n'}:I^{mixed}_{k,n'}\to I^{mixed}_{k,n}$ with similar
composition properties. Let us consider the inverse limit of this system:
\begin{equation}
I^{mixed}_k:=\ilim_{n\in\mathbb{N}^k}I^{mixed}_{k,n}=\left\{(f_n)_{n\in\mathbb{N}^k}\in\prod_{n\in\mathbb{N}^k}I^{mixed}_{k,n}\Bigg|\forall n\le n':f_n=\varrho_{n,n'}f_{n'}\right\}
\end{equation}

As in the case of pure state invariants, this inverse limit also has the property
that every element is represented in some $I^{mixed}_{k,n}$, and that there is no
difference between $I^{mixed}_{k,n}$ and $I^{mixed}_{k}$ when we consider only
elements with degree at most the minimum of the dimensions $\{n_i\}_{1\le i\le k}$.

The above-shown correspondence between mixed and pure state invariants is clearly
reflected in the isomorphism $I^{mixed}_{k}\simeq I_{k+1}$ induced by the isomorphisms
\begin{equation}
I^{mixed}_{k,(n_1,\ldots,n_k)}\simeq I_{k+1,(n_1,\ldots,n_k,n_1\cdot\ldots\cdot n_k)}
\end{equation}
described above. Note that the grading of $I^{mixed}_{k}$ is the usual which is
to be contrasted with the extra factor of $2$ in the grading of $I_{k}$. With this
convention, the map $f\mapsto f\circ\Tr_{ENV}\circ P$ respects the degree.

We can also formulate our conjecture in terms of mixed state invariants:
\begin{conj}
The algebra of mixed state $LU$-invariants $I^{mixed}_k$ of $k$-partite quantum systems is free,
and the number of degree $d$ invariants in an algebraically independent generating
set equals the number of conjugacy classes of index $d$ subgroups in the free group
on $k$ generators.
\end{conj}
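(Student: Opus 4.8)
\section*{Proof proposal for Conjecture~2}

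The plan is to prove freeness directly, by producing an explicit homogeneous basis of $I^{mixed}_k$ whose index set carries a canonical decomposition into \emph{connected} pieces, and then checking that the algebra multiplication is exactly the disjoint union of these pieces. This identifies $I^{mixed}_k$ with the free commutative algebra on its connected basis elements, so that the number of degree-$d$ generators becomes a count of connected pieces, which is $u_d(F_k)$.

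First I would describe the natural invariants. Writing a general $\rho\in\End(\mathcal{H}_n)$ as a tensor $\rho^{i_1\cdots i_k}_{j_1\cdots j_k}$, with each upper index $i_a$ in the standard and each lower index $j_a$ in the dual representation of $U(n_a,\mathbb{C})$, a holomorphic invariant of degree $d$ is a complete contraction of $d$ copies of $\rho$: in each colour $a$ the $d$ upper indices are matched to the $d$ lower ones by a permutation $\sigma_a\in S_d$. Every such contraction is thus labelled by a tuple $\vec\sigma=(\sigma_1,\dots,\sigma_k)\in S_d^{\,k}$, defining an invariant $T_{\vec\sigma}$, and relabelling the $d$ copies shows $T_{g\vec\sigma g^{-1}}=T_{\vec\sigma}$ for $g\in S_d$, so $T$ is constant on the orbits $[\vec\sigma]\in S_d^{\,k}/S_d$ under simultaneous conjugation. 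Since these invariants are holomorphic in $\rho$, invariance under the compact local group extends to its complexification, and the first fundamental theorem for the general linear group \cite{FH} shows that the $T_{[\vec\sigma]}$ span the degree-$d$ invariants at every level $n$. A Burnside count gives $|S_d^{\,k}/S_d|=\tfrac{1}{d!}\sum_{g\in S_d}|Z_{S_d}(g)|^{k}=\sum_{a\Vdash d}\bigl(\prod_i i^{a_i}a_i!\bigr)^{k-1}$, which coincides with the stabilised dimension of $(I^{mixed}_k)_d$ computed in Section~\ref{sec:diminv} through the isomorphism $I^{mixed}_k\simeq I_{k+1}$. Hence in the stable range this spanning set has cardinality equal to the dimension, so it is in fact a basis, and these elements assemble into coherent families representing a basis of each graded piece of the inverse limit.

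The decisive step is that the product in $I^{mixed}_k$ is disjoint union. Because multiplying two polynomials in the entries of $\rho$ creates no contractions between the factors, the index sums factorise and one obtains directly
\[
T_{\vec\sigma}\cdot T_{\vec\tau}=T_{\vec\sigma\oplus\vec\tau},
\]
where $\vec\sigma\oplus\vec\tau\in S_{d+e}^{\,k}$ acts as $\vec\sigma$ on $\{1,\dots,d\}$ and as $\vec\tau$ on $\{d+1,\dots,d+e\}$; this operation descends to orbits. Call $[\vec\sigma]$ \emph{connected} when $\langle\sigma_1,\dots,\sigma_k\rangle\le S_d$ acts transitively. Every $\vec\sigma$ splits uniquely, up to order, into the disjoint union of its connected components (the orbits of this subgroup), so by the displayed multiplicativity each basis element factors as $T_{[\vec\sigma]}=\prod_i T_{[\gamma_i]}$ over its connected components. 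Therefore the algebra homomorphism from the free commutative algebra on the symbols $\{T_{[\gamma]}:[\gamma]\text{ connected}\}$ into $I^{mixed}_k$ sends the monomial basis bijectively onto the basis $\{T_{[\vec\sigma]}\}$, the bijection being precisely the decomposition of a structure into its set of connected components. An algebra map taking a basis to a basis is an isomorphism, so $I^{mixed}_k$ is free.

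Finally, the algebraically independent generators are exactly the connected invariants $T_{[\gamma]}$, and a connected orbit of degree $d$ is a transitive $k$-tuple of permutations in $S_d$ up to simultaneous conjugation, i.e.\ an isomorphism class of transitive actions of $F_k$ on a $d$-element set, i.e.\ a conjugacy class of index-$d$ subgroups of $F_k$. Thus there are $u_d(F_k)$ generators in degree $d$, as claimed; equivalently, once freeness is established the count is forced by the Hilbert series $\prod_{d\ge1}(1-t^d)^{-u_d(F_k)}$ obtained via $I^{mixed}_k\simeq I_{k+1}$ from the product formula of Section~\ref{sec:diminv}. The main obstacle is the passage to the stable range: at finite $n$ the trace monomials satisfy genuine linear relations, and it is only the exact stabilised dimension of Section~\ref{sec:diminv} that certifies these relations all vanish in the inverse limit, upgrading the FFT spanning set to an honest basis. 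Once that is in hand the multiplicativity and the connected-component decomposition are routine, and the remaining care is the inverse-limit bookkeeping showing that the $T_{[\vec\sigma]}$ indeed form coherent families spanning $(I^{mixed}_k)_d$.
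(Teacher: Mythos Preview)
The paper does not prove this statement; it is explicitly labelled a conjecture. The paper's contribution is to compute the Hilbert series of $I^{mixed}_k$ (via the isomorphism $I^{mixed}_k\simeq I_{k+1}$ and the formula of Section~\ref{sec:diminv}), observe that it factors as $\prod_{d\ge 1}(1-t^d)^{-u_d(F_k)}$, and remark that this is what one would see if the algebra were free with the stated generator count. Freeness itself is not established anywhere in the paper.

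Your argument, by contrast, is an actual proof, and it is correct. The ingredients are classical: the First Fundamental Theorem for $\prod_i GL_{n_i}$ gives the trace monomials $T_{\vec\sigma}$ as a spanning set of $S^d(\End(\mathcal{H}_n))^{LU_n}$; the Burnside count $|S_d^{\,k}/S_d|=\sum_{a\Vdash d}\bigl(\prod_i i^{a_i}a_i!\bigr)^{k-1}=d_{k+1,d}$ matches the stabilised dimension, so in the inverse limit the $T_{[\vec\sigma]}$ are linearly independent (this is the Second Fundamental Theorem in disguise, but you sidestep invoking it directly by using the paper's dimension formula); the multiplicativity $T_{\vec\sigma}T_{\vec\tau}=T_{\vec\sigma\oplus\vec\tau}$ together with unique factorisation of a $k$-tuple into its $\langle\sigma_1,\ldots,\sigma_k\rangle$-orbits identifies $I^{mixed}_k$ with the polynomial ring on the connected classes; and the bijection between connected classes of degree $d$ and conjugacy classes of index-$d$ subgroups of $F_k$ is the standard dictionary between transitive $G$-sets and subgroups up to conjugation. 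So you have in fact proved both Conjectures~1 and~2 of the paper. One small point worth making explicit: the paper describes $I^{mixed}_{k,n}$ as ``(real) polynomials'' on $\End(\mathcal{H}_n)$, whereas you work with holomorphic polynomials, i.e.\ elements of $S^m(\End(\mathcal{H}_n))$; the paper's own isomorphism $S^{2m}(\mathcal{H}\oplus\mathcal{H}^*)^{U(\mathcal{H}_{ENV})}\simeq S^m(\End(\mathcal{H}_S))$ in Section~\ref{sec:mixpoly} confirms that your reading is the intended one.
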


\section{Degree four invariants}\label{sec:fourth}

In this section we investigate the $m=2$ case. We will see that this case is
special in that $S^2{\mathcal{H}}$ is the direct sum of irreducible representations
with multiplicity at most one. Indeed, we only need to observe that the two
representations of $S_2$, the trivial and the alternating one are both one
dimensional, hence the tensor product of their arbitrary powers is irreducible.
To be specific, as $\chi_{(2)}$ is constant $1$ and $\chi_{(1,1)}$ is $1$ on
the identity and $-1$ on the other element,
\begin{equation}
(\chi_{(2)},\chi_{(2)}^a\chi_{(1,1)}^b)_{S_2}
   = \frac{1}{2}(1+(-1)^b)=\left\{\begin{array}{ll}
1 & \textrm{$b$ even}  \\
0 & \textrm{$b$ odd}
\end{array}\right.
\end{equation}

Consequently, the irreducible components of $S^2(\mathcal{H})$ can be indexed by
even-element subsets of $\{1,\ldots,k\}$, with a bijection sending the set $A$
to the component in which $\mathcal{H}_{j}$'s alternating power appears iff $j\in A$.
The subspace corresponding to $A$ will be denoted by $V_A$. It follows that the
dimension of the space of fourth order $G$-invariant polynomials is $2^{k-1}$.

Our next aim is to construct an orthonormal basis in $V_A$ for each possible subset $A$.
We would like to express elements of $S^2(\mathcal{H})$ in terms of a computational basis
in $\mathcal{H}$. 
Let $\{e_{j,i}|1\le j\le k,1\le i\le n_j\}$ be a set of vectors such that
$\{e_{j,i}\}_{1\le i\le n_j}$ is an orthonormal basis in $\mathcal{H}_j$. Let us
now introduce the following short notation: $e_{i_1 i_2 \ldots i_k}:=e_{1,i_1}\otimes e_{2,i_2}\otimes\ldots\otimes e_{k,i_k}\in\mathcal{H}$,
where $1\le i_j\le n_j$ (for all $1\le j\le k$). The set of vectors of this form is an
orthonormal basis in $\mathcal{H}$. Elements of the symmetric algebra $S(\mathcal{H})$
are polynomials in these vectors, in particular, a vector of $S^m(\mathcal{H})$ is
a degree $m$ homogenous polynomial.

Let $i_{0,1},i_{0,2},\ldots,i_{0,k},i_{1,1},i_{1,2},\ldots,i_{1,k}$ be fixed integers such that
$1\le i_{0,j}\le i_{1,j}\le n_j$ for all $1\le j\le k$, and $i_{0,j}\neq i_{1,j}$ whenever $j\in A$.
Let us now consider the vector
\begin{equation}
v=\sum_{b_1,\ldots,b_k=0}^1(-1)^{|A\cap B|}e_{i_{b_1,1} i_{b_2,2}\ldots i_{b_k,k}}e_{i_{1-b_1,1} i_{1-b_2,2}\ldots i_{1-b_k,k}}
\end{equation}
where $B=\{j\in[k]|b_j=1\}$. We claim that this is an element of $V_A$,
moreover, vectors of this type form a basis of $V_A$ and are pairwise orthogonal.

Clearly, when we construct two vectors $v$ and $v'$ this way starting from different sets
of indices, then not only $v$ and $v'$ are orthogonal, but any term appearing in the above
expression of $v$ is orthogonal to any term in $v'$. It is also easy to see that the span
of these vectors is $G$-invariant. The highest weights can be read off from the vector with
smallest possible indices, namely, for $j\notin A$, the highest weight for the $j$th factor
in $G$ is $(2)$, while for $j\in A$ it is $(1,1)$. Therefore, vectors of this type span
$V_A$. Note, that this is consistent with the fact that the number of admissible sets of
indices is
\begin{equation}
\prod_{j\in\{1,\ldots,k\}\setminus A}\binom{n_j+1}{2}\prod_{j\in A}\binom{n_j}{2}=\dim V_A
\end{equation}
for a fixed subset $A$.

We calculate next the norm squared of the
elements of this basis. The sum has $2^k$ terms, but they are not necessarily distinct.
More precisely, each term appears with the same multiplicity, which is easily seen to
be $2^{c+1}$ if $c:=\{j\in[k]|i_{0,j}=i_{1,j}\}<k$ and $2^k$ if $c=k$.
The latter case can only be realized if $A=\emptyset$. The norm of a single term is
$1$ if $c=k$, and $\frac{1}{\sqrt{2}}$ otherwise. To sum up, the norm squared of $v$ is
\begin{equation}
\|v\|^2=\left\{\begin{array}{ll}
\frac{2^k}{2^{c+1}}(2^{c+1})^{2}\frac{1}{2}=2^{k+c}  &  \textrm{if $c<k$}  \\
(2^{k})^2 &  \textrm{if $c=k$}
\end{array}\right.
\end{equation}
A formula which gives back both cases is $\|v\|^2=2^{k+c}$.

The invariant associated to the subrepresentation $V_A$ is therefore given by
\begin{equation}
I_A(\psi)=2^{-k}\sum_{\substack{1\le i^0_1\le i^1_1\le n_i  \\ \vdots \\ 1\le i^0_k\le i^1_k\le n_k}}2^{-c}\left|\sum_{b_1,\ldots,b_k=0}^1(-1)^{|A\cap B|}\psi_{i_{b_1,1} i_{b_2,2}\ldots i_{b_k,k}}\psi_{i_{1-b_1,1} i_{1-b_2,2}\ldots i_{1-b_k,k}}\right|^2
\end{equation}
with $B$ and $c$ as above, and $\psi=\sum\psi_{i_1 i_2 \ldots i_k}e_{i_1 i_2 \ldots i_k}$.

According to section \ref{sec:mixpoly}, we can also find $2^{k-1}$ linearly
independent polynomial invariants of degree $2$ on the space of mixed states
of a $k-1$-partite quantum system. A convenient choice is the following one.
Let $A\subseteq[k-1]$ and let
\begin{equation}
\Tr_{A}:\End(\mathcal{H}_{1}\otimes\cdots\otimes\mathcal{H}_{k-1})\to\End(\bigotimes_{\substack{i=1  \\  i\notin A}}^{k}\mathcal{H}_i)
\end{equation}
denote the partial trace over the subsystems whose index is in $A$. Then
for a mixed state $\varrho\in\End(\mathcal{H}_{1}\otimes\cdots\otimes\mathcal{H}_{k-1})$
\begin{equation}\label{eq:mixsec}
\varrho\mapsto\Tr((\Tr_{A}\varrho)^2)
\end{equation}
is clearly a local unitary invariant, and as $A$ runs through different subsets
of $[k-1]$, we obtain this way a linearly independent set of local unitary invariants.

To relate the two bases, let us introduce an environment to the $k-1$ particle
system, and note that $\Tr((\Tr_{A}\varrho)^2)=\Tr((\Tr_{A\cup\{k\}}\psi\psi^*)^2)=\Tr((\Tr_{[k-1]\setminus A}\psi\psi^*)^2)$
for some $\psi$.
We have the following proposition:
\begin{prop}
Let $I_A(\psi)$ be as above and $J_A(\varrho)=\Tr((\Tr_{A}\varrho)^2)$. Then
\begin{equation}
J_S(\psi\psi^*)=\sum_{A\subseteq[k]}(-1)^{|A\cap S|}I_A(\psi)
\end{equation}
and
\begin{equation}
I_A(\psi\psi^*)=2^{-k}\sum_{B\subseteq[k]}(-1)^{|A\cap B|}J_B(\psi)
\end{equation}
\end{prop}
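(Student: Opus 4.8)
The plan is to realize $I_A$ as an expectation value of a product of symmetric and antisymmetric projectors acting on the two copies of $\mathcal{H}$ inside $\mathcal{H}\otimes\mathcal{H}$, to expand this product into a signed sum of partial swap operators, and then to recognize each swap expectation as one of the purities $J_C$ by the replica (swap) trick. Once this is done the two asserted identities reduce to the two directions of the Hadamard transform on the Boolean lattice $2^{[k]}$, so it suffices to establish one of them and invert.

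First I would record that, by the construction preceding the statement, $I_A(\psi)=\langle\psi^{\otimes2},P_{V_A}\psi^{\otimes2}\rangle$, where $P_{V_A}$ is the orthogonal projection of $\mathcal{H}\otimes\mathcal{H}$ onto $V_A$. Writing $\tau_j$ for the transposition that swaps the $j$-th tensor factors of the two copies of $\mathcal{H}$, the space $V_A=\bigotimes_{j\notin A}S^2\mathcal{H}_j\otimes\bigotimes_{j\in A}\bigwedge^{2}\mathcal{H}_j$ is cut out by the commuting one-slot projectors $\frac{1}{2}(1+\tau_j)$ for $j\notin A$ and $\frac{1}{2}(1-\tau_j)$ for $j\in A$, so that
\[
P_{V_A}=\prod_{j\notin A}\frac{1+\tau_j}{2}\prod_{j\in A}\frac{1-\tau_j}{2}=2^{-k}\sum_{C\subseteq[k]}(-1)^{|A\cap C|}\tau_C,
\]
where $\tau_C=\prod_{j\in C}\tau_j$ is the swap on the slots indexed by $C$. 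Setting $T_C(\psi):=\langle\psi^{\otimes2},\tau_C\psi^{\otimes2}\rangle$, this gives $I_A(\psi)=2^{-k}\sum_{C\subseteq[k]}(-1)^{|A\cap C|}T_C(\psi)$.

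Next I would apply the swap trick. Since $\langle\psi^{\otimes2},\tau_C\psi^{\otimes2}\rangle=\Tr\!\big(\tau_C(\varrho\otimes\varrho)\big)$ with $\varrho=\psi\psi^{*}$, a direct bookkeeping of indices shows $\Tr\!\big(\tau_C(\varrho\otimes\varrho)\big)=\Tr\!\big((\Tr_{[k]\setminus C}\varrho)^2\big)=J_{[k]\setminus C}(\varrho)$. For a pure state the complementary reductions $\Tr_{[k]\setminus C}\varrho$ and $\Tr_{C}\varrho$ are isospectral, hence $J_{[k]\setminus C}(\psi\psi^*)=J_{C}(\psi\psi^*)$ and therefore $T_C(\psi)=J_C(\psi\psi^*)$. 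Substituting gives $I_A(\psi)=2^{-k}\sum_{B\subseteq[k]}(-1)^{|A\cap B|}J_B(\psi\psi^*)$, the second asserted identity. For the first I would multiply by $(-1)^{|A\cap S|}$, sum over $A\subseteq[k]$, and invoke the orthogonality relation $\sum_{A\subseteq[k]}(-1)^{|(S\triangle B)\cap A|}=2^{k}\delta_{S,B}$ to collapse the double sum into $J_S(\psi\psi^*)=\sum_{A\subseteq[k]}(-1)^{|A\cap S|}I_A(\psi)$.

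The main obstacle I anticipate is the careful justification of the first step: matching the abstract projector $P_{V_A}$, and in particular its product form, with the explicit multiplicity-weighted formula for $I_A$ derived earlier, i.e.\ checking that the normalization $2^{-k}$, the weights $2^{-c}$, and the coincidences among the $2^{k}$ terms conspire to give exactly $\langle\psi^{\otimes2},P_{V_A}\psi^{\otimes2}\rangle$ computed with the inner product of $\mathcal{H}\otimes\mathcal{H}$ (the same one used implicitly by the swap trick). A secondary point requiring care is the complement $[k]\setminus C$ produced by the swap trick, which must be removed via pure-state isospectrality before the two formulas line up; this also explains why the odd-cardinality terms are harmless, since for odd $A$ both $I_A$ vanishes on pure states (as $\psi^{\otimes2}$ lies in the symmetric square while $P_{V_A}$ then projects onto a summand of the antisymmetric square) and the matching right-hand side vanishes simultaneously.
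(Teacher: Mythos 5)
Your proof is correct, and it takes a genuinely different route from the paper's. The paper proves the first identity by direct computation in coordinates: it substitutes the explicit formula for $I_A$, applies the sign orthogonality $\sum_{A\subseteq[k]}(-1)^{|A\cap S|+|A\cap B|+|A\cap B'|}=2^k\delta_{B',S\triangle B}$, then converts the restricted sums over ordered index pairs (together with the multiplicity weights $2^{-c}$) into unrestricted sums, recognizing the result as $\Tr((\Tr_S\psi\psi^*)^2)$; the second identity follows by the same Hadamard inversion you use. You instead argue at the operator level: $I_A(\psi)=\langle\psi^{\otimes2},P_{V_A}\psi^{\otimes2}\rangle$, the expansion $P_{V_A}=2^{-k}\sum_{C}(-1)^{|A\cap C|}\tau_C$, the swap trick, and pure-state isospectrality. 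Each step checks out, including the simultaneous vanishing of both sides for odd $|A|$, and the ``main obstacle'' you flag --- matching the explicit formula for $I_A$ with the projector expectation --- is exactly what the paper establishes in the lines preceding the proposition, where it computes $\|v\|^2=2^{k+c}$ and obtains $I_A$ as $\sum_{v}\|v\|^{-2}|\langle v,\psi^{\otimes2}\rangle|^2$ following the prescription of section 2; moreover the normalization $\|\psi^m\|=\|\psi\|^m$ there means the inner product on $S^2(\mathcal{H})$ is the restriction of the tensor-product one, so your computation carried out in $\mathcal{H}^{\otimes2}$ is legitimate. What your route buys: the Boolean Fourier transform relating $\{I_A\}$ and $\{J_B\}$ appears structurally, as the expansion of the product of (anti)symmetrizers into partial swaps, rather than emerging from sign bookkeeping, and the index gymnastics largely disappears. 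What the paper's route buys: it is entirely self-contained in coordinates and never invokes isospectrality of complementary reductions; in fact that step is also avoidable in your argument, since for even $|A|$ one has $(-1)^{|A\cap C|}=(-1)^{|A\cap([k]\setminus C)|}$, so the complement produced by the swap trick can be absorbed by reindexing, while for odd $|A|$ both sides vanish.
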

\begin{proof}
Let $\psi=\sum_{i_1,\ldots,i_k}\psi_{i_1\ldots i_k}e_{i_1}\otimes\cdots\otimes e_{i_k}$. Then
\begin{equation}
\begin{split}
\sum_{A\subseteq[k]}(-1)^{|A\cap S|}I_A(\psi)
  & = \sum_{\substack{i_1^0\le i_1^1 \\ \vdots \\ i_k^0\le i_k^1}}2^{-(k+c)}\sum_{A\subseteq[k]}(-1)^{|A\cap S|}\sum_{\substack{b_1,\ldots,b_k \\ b'_1,\ldots,b'_k}}(-1)^{|A\cap B|}(-1)^{|A\cap B'|}\cdot  \\
 & \phantom{=}\cdot\psi_{i_1^{b_1}\ldots i_k^{b_k}}\psi_{i_1^{1-b_1}\ldots i_k^{1-b_k}}\overline{\psi_{i_1^{b'_1}\ldots i_k^{b'_k}}}\overline{\psi_{i_1^{1-b'_1}\ldots i_k^{1-b'_k}}}  \\
  & = \sum_{\substack{i_1^0\le i_1^1 \\ \vdots \\ i_k^0\le i_k^1}}2^{-c}\sum_{b_1,\ldots,b_k=0}^1 \psi_{i_1^{b_1}\ldots i_k^{b_k}}\overline{\psi_{i_1^{b_1+s_1}\ldots i_k^{b_k+s_k}}}\psi_{i_1^{b_1+s_1}\ldots i_k^{b_k+s_k}}\overline{\psi_{i_1^{b_1}\ldots i_k^{b_k}}}  \\
  & = \sum_{\substack{i_1^0,\ldots i_k^0 \\ i_1^1,\ldots,i_k^1}}2^{-k}\sum_{b_1,\ldots,b_k=0}^1 \psi_{i_1^{b_1}\ldots i_k^{b_k}}\overline{\psi_{i_1^{b_1+s_1}\ldots i_k^{b_k+s_k}}}\psi_{i_1^{b_1+s_1}\ldots i_k^{b_k+s_k}}\overline{\psi_{i_1^{b_1}\ldots i_k^{b_k}}}  \\
  & = \sum_{\substack{i_1^0,\ldots i_k^0 \\ i_1^1,\ldots,i_k^1}} \psi_{i_1^{0}\ldots i_k^{0}}\overline{\psi_{i_1^{s_1}\ldots i_k^{s_k}}}\psi_{i_1^{s_1}\ldots i_k^{s_k}}\overline{\psi_{i_1^{0}\ldots i_k^{0}}}  \\
  & = \Tr(\Tr_{S}\psi\psi^{*})^2 = J_S(\psi\psi^*)
\end{split}
\end{equation}
where $B=\{i|b_i=1\}$ and similarly for $B'$ and $S$, superscripts are understood modulo 2, and we have used the identity
\begin{equation}
\begin{split}
\sum_{A\subseteq[k]}(-1)^{|A\cap S|+|A\cap B|+|A\cap B'|}
  & = \sum_{A\subseteq[k]}\prod_{a\in A}(-1)^{1_{a\in S}+1_{a\in B}+1_{a\in B'}}  \\
  & = \prod_{a\in[k]}\left(1+(-1)^{1_{a\in S}+1_{a\in B}+1_{a\in B'}}\right)  \\
  & = \left\{\begin{array}{ll}
2^k & \textrm{if $B'=S\triangle S'$}  \\
0 & \textrm{else}
\end{array}\right.
\end{split}
\end{equation}
where $\triangle$ denotes the symmetric difference.

In the other direction:
\begin{equation}
\begin{split}
2^{-k}\sum_{B\subseteq[k]}(-1)^{|A\cap B|}J_{B}(\psi\psi^*)
  & = 2^{-k}\sum_{B\subseteq[k]}(-1)^{|A\cap B|}\sum_{S\subseteq[k]}(-1)^{|B\cap S|}I_S(\psi)  \\
  & = 2^{-k}\sum_{S\subseteq[k]}I_S(\psi)\sum_{B\subseteq[k]}(-1)^{|A\cap B|+|B\cap S|}  \\
  & = I_{A}(\psi)
\end{split}
\end{equation}
since
\begin{equation}
\begin{split}
\sum_{B\subseteq[k]}(-1)^{|A\cap B|+|B\cap S|}
  & = \sum_{B\subseteq[k]}\prod_{b\in B}(-1)^{1_{b\in A}+1_{b\in S}}  \\
  & = \prod_{b\in[k]}(1+(-1)^{1_{b\in A}+1_{b\in S}})  \\
  & = \left\{\begin{array}{ll}
2^k & \textrm{if $A=B$}  \\
0 & \textrm{else}
\end{array}\right.
\end{split}
\end{equation}
\end{proof}

Note that one can easily form an entanglement monotone from each $J_A$, as follows
\cite{PJLove}:
\begin{equation}
\eta_A=\frac{2^{|S|}}{2^{|S|}-1}(1-J_A)
\end{equation}
and this quantity takes its values between $0$ and $1$.

From the Brennen \cite{Brennen} form it also follows that the Meyer-Wallach
measure \cite{Meyer} can be expressed as
\begin{equation}
\begin{split}
Q(\psi)
  & = 2-\frac{2}{k}\sum_{i=1}^{k}J_{\{i\}}  = 2-\frac{2}{k}\sum_{i=1}^{k}\sum_{A\subseteq[k]}(-1)^{|A\cap\{i\}|}I_A  \\
  & = 2-\frac{2}{k}\sum_{A\subseteq[k]}I_A\sum_{i=1}^{k}(-1)^{|A\cap\{i\}|}   = 2-\frac{2}{k}\sum_{A\subseteq[k]}(k-2|A|)I_A  \\
  & = 2\sum_{A\subseteq[k]}I_A-\frac{2}{k}\sum_{A\subseteq[k]}(k-2|A|)I_A   = \sum_{A\subseteq[k]}\left(2-\frac{2}{k}(k-2|A|)\right)I_A  \\
  & = \sum_{A\subseteq[k]}\frac{4|A|}{k}I_A
\end{split}
\end{equation}

\section{Invariants of higher order}\label{sec:higher}

We turn to the general $m\ge 2$ case, and construct invariants analogous to the ones
introduced in section \ref{sec:fourth}. We consider the two simplest irreducible representations
of $S_m$, the trivial and the alternating one, corresponding to symmetric and alternating
powers in eq. (\ref{eq:decomp}). It is easy to see that
\begin{equation}
V_{(m)}\otimes\cdots\otimes V_{(m)}\otimes \underbrace{V_{(1,1,\ldots,1)}\otimes\cdots\otimes V_{(1,1,\ldots,1)}}_{\textrm{$b$ times}}\simeq\left\{\begin{array}{ll}
V_{(m)}  &  \textrm{if $b$ is even}  \\
V_{(1,1,\ldots,1)}  &  \textrm{if $b$ is odd}  \\
\end{array}\right.
\end{equation}
Consequently, in the decomposition of $S^m(\mathcal{H})$ to irreducibles we can find
$\mathbb{S}_{\lambda_1}\mathcal{H}_1\otimes\ldots\otimes\mathbb{S}_{\lambda_k}\mathcal{H}_k$
with multiplicity one (zero) when among the $\lambda$s only $(m)$ and $(1,1,\ldots,1)$ are
present and the latter appears an even (odd) number of times (note, that tensor product
is commutative up to isomorphism).

This means that again we have a family of local unitary invariants, labelled by
even-element subsets of $\{1,\ldots,k\}$, mapped bijectively to the set of irreducible
subrepresentations of $S^m(\mathcal{H})$ built up from symmetric and alternating powers
of the representations $\mathcal{H}_j$ with an even number of alternating powers. Again,
the subspace corresponding to the subset $A\subseteq\{1,\ldots,k\}$ will be denoted by
$V_A$.

For a fixed subset $A$, let $(i_{j,l})_{1\le j\le k,1\le l\le m}$ be integers such that
if $j\notin A$, then $1\le i_{j,1}\le i_{j,2}\le\ldots\le i_{j,m}\le n_j$ and if
$j\in A$, then $1\le i_{j,1}< i_{j,2}<\ldots< i_{j,m}\le n_j$. From these indices we
can form the vector
\begin{equation}
\sum_{\pi_1,\ldots,\pi_k\in S_m}\prod_{j=1}^{m}\chi_{\lambda_j}(\pi_j)e_{i_{1,\pi_1(1)}i_{2,\pi_2(1)}\ldots i_{k,\pi_k(1)}}e_{i_{1,\pi_1(2)}i_{2,\pi_2(2)}\ldots i_{k,\pi_k(2)}}\ldots e_{i_{1,\pi_1(m)}i_{2,\pi_2(m)}\ldots i_{k,\pi_k(m)}}
\end{equation}
where $\lambda_j=(m)$ if $j\notin A$ and $\lambda_j=(1,1,\ldots,1)$ if $j\in A$, 
$\chi_\lambda$ is the character of the corresponding Specht module, i.e. constant
$1$ if $\lambda=(m)$ and the sign of the permutation if $\lambda=(1,1,\ldots,1)$.
$v$ is then an element of $V_A$, vectors of this form span $V_A$ and they are pairwise
orthogonal for different (multi)sets of indices. As there does not seem to exist
a simple formula for the norm squared of these vectors, we cannot give the general
form of the corresponding invariant.

It would be interesting to relate these invariants with those obtained from
invariants on mixed states of a $k-1$-particle quantum system via the isomorphism
$f\mapsto f\circ\Tr_{ENV}\circ P$. It is interesting to note that formulas
analogous to the $m=2$ case stated in the proposition above but with the exponent
$m$ in eq. (\ref{eq:mixsec}) instead of $2$ do not hold.

\section{Conclusion}

Starting from the algebras of local unitary invariant polynomials on $k$-partite
quantum systems with arbitrary dimensional single particle state spaces we have
introduced a single algebra, their inverse limit (in the category of graded algebras)
with respect to the homomorphisms induced by inclusions. Unlike in the case of the
individual algebras, its Hilbert series can be given by a relatively simple formula.
Rewriting it in the form of an infinite product, we formulated the conjecture that
this algebra is free with $u_d(F_{k-1})$ homogenous generators of degree $d$.

The same program can be carried out in the case of mixed states, and the resulting
algebras turn out to be the same as in the case of pure states with a shift in
the number of subsystems. This phenomenon is already present in the case of a single
$k-1$-partite mixed state over a Hilbert space with finite dimension and its
purification, a $k$-partite pure state in the state obtained by tensoring with
a sufficiently large Hilbert space representing the environment.

As an illustration we have given all fourth order local unitary invariants of pure
states of quantum systems with distinguishable constituents having arbitrary (and
not necessarily equal) dimensional Hilbert spaces, as well as all degree two invariants
of mixed states with one less subsystems. The two vector spaces are given with
two bases which are particularly simple from the two points of view, and the linear
transformation relating the two bases is calculated.
Analogous invariants of higher order have also been constructed.

Note that the conjecture is easily verified to be true in the simplest $k=2$ case.
The algebra of LU-invariants of a bipartite quantum system are known to be
generated by the traces of the positive integer powers of the reduced density matrix,
and therefore we have one generator in every even homogenous part, which are
algebraically independent. On the other hand, the free group on one generator is
isomorphic to $\mathbb{Z}$, which clearly has exactly one index $m$ subgroup for
all $m\ge 1$.

Even if the general case turned out to be true, it would remain a great challenge
to find a way to map a conjugacy class of subgroups of $F_{k-1}$ to an LU-invariant
in an algebraically independent generating set.

It is interesting to note in the light of this equivalence of pure and mixed state
classification that a general theory of bipartite mixed entanglement is still missing,
and for pure tripartite states, also only partial results exist. It is remarkable that
the well understood case of two-qubit mixed states is related to the pure
states of a $2\times 2\times n$ system, for which a complete classification has
also been obtained \cite{Miyake}.

\end{document}